\newcommand{\Order}{\mathrm{O}}
\newcommand{\OrderT}{\tilde{\mathrm{O}}}
\newcommand{\given}{\,|\,}
\renewcommand{\Pr}{\mathbb{P}}
\title{Optimal Lower Bounds for Matching and Vertex Cover in Dynamic Graph Streams\thanks{Jacques Dark was supported by an EMEA Microsoft Research studentship and European Research Council grant ERC-2014-CoG 647557.}}
\author{Jacques Dark\inst{1} and Christian Konrad\inst{2}}
\institute{Department of Computer Science, University of Warwick, Coventry, UK, \texttt{j.dark@warwick.ac.uk} \and 
Department of Computer Science, University of Bristol, Bristol, UK, \texttt{christian.konrad@bristol.ac.uk}}
\newtheorem{problem}{Problem}
\begin{document}

 \maketitle
 
 \begin{abstract}
  In this paper, we give simple optimal lower bounds on the one-way two-party communication complexity of approximate \textsf{Maximum Matching}
 and \textsf{Minimum Vertex Cover} with deletions. In our model, Alice holds a set of edges 
 and sends a single message to Bob. Bob holds a set of edge deletions, which form a subset of Alice's edges,
 and needs to report a large matching or a small vertex cover in the graph spanned by the edges that are not deleted. 
 Our results imply optimal space lower bounds for insertion-deletion streaming algorithms for \textsf{Maximum Matching}
 and \textsf{Minimum Vertex Cover}. 
 
 Previously, Assadi et al. [SODA 2016] gave an optimal space lower bound for 
 insertion-deletion streaming algorithms for \textsf{Maximum Matching} via the simultaneous model of communication. 
 Our lower bound is simpler and stronger in several aspects:  
  The lower bound of Assadi et al. only holds for algorithms that (1) are able to process streams 
  that contain a triple exponential number of deletions in $n$, the number of vertices of the input graph; (2) 
  are able to process multi-graphs; and (3) never output edges that do not exist in the input graph when the randomized algorithm
  errs. In contrast, our lower bound even holds for algorithms that (1) rely on short ($\Order(n^2)$-length) input streams; 
  (2) are only able to process simple graphs; and (3) may output non-existing edges when the algorithm errs.
 \end{abstract}

 \section{Introduction}
Streaming algorithms for processing massive graphs have been studied for two decades \cite{hrrav98}. In the 
most traditional setting, the {\em insertion-only model}, an algorithm receives a sequence of the edges of the 
input graph in arbitrary order, and the objective is to solve a graph problem using as little space as possible. 
The insertion-only model has received significant attention, and many problems, such as matchings (e.g. \cite{kmm12,gkk12,kks14,kr16,kt17,k18,ps19,fhmrr20}), 
independent sets (e.g. \cite{hssw12,hhls16,cdk18,cdk19}), and subgraph counting (e.g. \cite{kmss12,cj17,bc17}), 
have since been studied in this model. See \cite{m14} for an excellent survey.

In 2012, Ahn et al. \cite{agm12} introduced the first techniques for addressing {\em insertion-deletion} graph streams,
where the input stream consists of a sequence of edge insertions and deletions. They showed that
many problems, such as \textsf{Connectivity} and \textsf{Bipartiteness}, can be solved using the same amount of space 
as in insertion-only streams up to poly-logarithmic factors. Various other works subsequently gave results of a similar 
flavor and presented insertion-deletion streaming algorithms with similar space complexity as their insertion-only
counterparts for problems including \textsf{Spectral Sparsification} \cite{kmmmnst20} and \textsf{$\Delta+1$-coloring} 
\cite{ack19}. Konrad \cite{k15} and 
Assadi et al. \cite{akly16} were the first to give a separation result between the insertion-only graph stream model and the 
insertion-deletion graph stream model: While it is known that a $2$-approximation to \textsf{Maximum Matching} can be computed using space 
$\Order(n \log n)$ in insertion-only streams, Konrad showed that space $\Omega(n^{\frac{3}{2} - 4 \epsilon})$ is required 
for an $n^{\epsilon}$-approximation in insertion-deletion streams, and Assadi et al. gave a lower
bound of $n^{2-3\epsilon-o(1)}$ for such an approximation. Assadi et al. also presented an $\OrderT(n^{2-3\epsilon})$ space 
algorithm that matches their lower bound up to lower order terms, which establishes that their lower bound is optimal 
(a different algorithm that matches this lower bound is given by Chitnis et al. \cite{ccehmmv16}).

Both Konrad and Assadi et al. exploit an elegant connection between insertion-deletion streaming algorithms
and linear sketches.
Ai et al. \cite{ahlw16}, building on the work of Yi et al. \cite{lnw14}, showed that insertion-deletion graph 
streaming algorithms can be characterized as algorithms that essentially solely rely on the computation of linear sketches of the input 
stream. A consequence of this result is that space lower bounds for insertion-deletion streaming algorithms
can also be proved in the {\em simultaneous model of communication}, since linear sketches can be implemented in this 
model. This provides an alternative to the more common approach of proving streaming lower bounds in the one-way model of 
communication. In particular, the lower bounds by Konrad and Assadi et al. are proved in the simultaneous model of communication.

From a technical perspective, this model has various attractive features, however, it comes with a major disadvantage:
The characterization of Ai et al. only holds for insertion-deletion streaming algorithms that (1) are able to process 
``very long'' input streams, i.e., input streams of triple exponential length in $n$, the number of vertices of the 
input graph, and (2) are able to process multi-graphs. In particular, this characterization does not hold for 
insertion-deletion streaming algorithms that rely on the assumption that input streams are short and the graph 
described by the input stream is always simple. Consequently, the lower bounds of Konrad and Assadi et al. 
do not hold for such algorithms.

\vspace{0.1cm}
\textbf{Our Results.}
In this work, we prove an optimal space lower bound for \textsf{Maximum Matching} in insertion-deletion streams 
via the one-way two-party model of communication. Our lower bound construction yields insertion-deletion streams of 
length $\Order(n^2)$ and does not involve multi-edges. Our lower bound therefore also holds for streaming algorithms that are 
designed for short input streams and simple graphs for which the characterization by Ai et al. does not hold.
Furthermore, the optimal lower bound by Assadi et al. \cite{akly16} only holds for streaming algorithms that 
never output non-existing edges when the (randomized) algorithms fail. We do not require this restriction.

Our lower bound method is simple and more widely applicable. Using the same method, we also give an optimal 
lower bound for \textsf{Minimum Vertex Cover}, showing that computing a $n^\epsilon$-approximation requires $\Omega(n^{2-2\epsilon})$
space. Assadi and Khanna mention in \cite{ak17} that the $n^{2 - 3\epsilon - o(1)}$ space lower bound 
for \textsf{Maximum Matching} given in \cite{akly16} also applies to \textsf{Minimum Vertex Cover}. Our lower bound
therefore improves on this result by a factor of $n^{{\epsilon} + o(1)}$. Furthermore, we show that our lower bound is 
optimal up to a factor of $\log n$: We give a very simple deterministic 
insertion-deletion streaming algorithm for \textsf{Minimum Vertex Cover} that uses space $\Order(n^{2-2\epsilon} \log n)$.

While the main application of our lower bounds in the one-way two-party communication model are lower bounds for insertion-deletion 
graph streaming algorithms, we believe that our lower bounds are of independent interest. Indeed, the one-way two-party communication 
complexity of \textsf{Maximum Matching} without deletions has been addressed in \cite{gkk12}, and our result can therefore also be understood
as a generalization of their model to incorporate deletions.

\vspace{0.1cm}
\textbf{The Simultaneous Model of Communication.}
The lower bounds by Konrad \cite{k15} and Assadi et al. \cite{akly16} are proved in the simultaneous model of communication. 
In this model,
a typically large number of parties $k$ hold not necessarily disjoint subsets of the edges of the input graph.
Each party $P_i$ sends a message $M_i$ to a referee, who then outputs the result of the protocol. 
The connection between insertion-deletion streaming algorithms and linear sketches by Ai et al. \cite{ahlw16} then implies that 
a lower bound on the size of any message $M_i$ yields a lower bound on the space requirements of any 
insertion-deletion streaming algorithm. 

In the lower bound of Assadi et al. \cite{akly16} for \textsf{Maximum Matching}, each party $P_i$ holds the edges $E_i$
of a dense subgraph, which itself constitutes a {\em Ruzsa-Szemer\'{e}di graph}, i.e., a graph whose edge set can be partitioned 
into large disjoint induced matchings. All previous streaming lower bounds for approximate \textsf{Maximum Matching} rely on 
realizations of Ruzsa-Szemer\'{e}di graphs \cite{gkk12,k15,akly16}. 
Their construction is so that 
only a single induced matching of every party $P_i$ is useful for the construction of a global large matching. 
Due to symmetry of the construction, the parties are unable to identify the important induced matching and therefore 
need to send large messages that contain information about most of the induced matchings to the referee for them to 
be able to compute a large global matching. Interestingly, none of the parties hold edge deletions in their construction. 

\vspace{0.1cm}
\textbf{The One-way Model of Communication.}
In this paper, we give a lower bound in the one-way two-party model of communication. In this model, Alice holds
a set of edges $E$ of the input graph and sends a message $M$ to Bob. Bob holds a set of edge deletions $D \subseteq E$
and outputs a large matching in the graph spanned by the edges $E \setminus D$. A standard reduction shows that 
a lower bound on the size of message $M$ also constitutes a lower bound on the space requirements of an 
insertion-deletion streaming algorithm.
The two models are illustrated in Figure~\ref{fig:models}.

\begin{figure}[h!]
\begin{center}
 \begin{tikzpicture}
  \node (ref) at (0,0) {\textbf{Referee}};
  \node (out1) at (2,0) {result};
  \node (p1) at (-1.5, -1.5) {$\mathbf{P_1}$};
  \node (p2) at (-0.5, -1.5) {$\mathbf{P_2}$};
  \node (pdots) at (0.5, -1.5) {\dots};
  \node (p3) at (1.5, -1.5) {$\mathbf{P_k}$};
  
  \node(e1) at (-1.9, -2) {$E_1 \subseteq E$};
  \node(e2) at (-0.3, -2) {$E_2 \subseteq E$};
  \node(e3) at (1.5, -2) {$E_k \subseteq E$};
  
  \draw[->] (ref) -- (out1) {};
  \draw[->] (p1) -- (ref) node [midway, left,xshift=-2pt,yshift=-1pt] {$M_1$};
  \draw[->] (p2) -- (ref) node [midway, right,yshift=-1pt,xshift=-1pt] {$M_2$};
  \draw[->] (p3) -- (ref) node [midway, right,yshift=-1pt, xshift=4pt] {$M_k$};

  \node (alice) at (5, -0.5) {\textbf{Alice}};
  \node (bob) at (8, -0.5) {\textbf{Bob}};
  \node (out2) at (9.7,-0.5) {result};
  
  \draw[->] (alice) -- (bob) node [midway, above] {$M$};
  \draw[->] (bob) -- (out2) {};
  
  \node (ae1) at (5, -1) {$E$};
  \node (be2) at (8, -1) {$D \subseteq E$};
 \end{tikzpicture}
 \caption{The simultaneous (left) and the one-way two-party (right) models of communication. \label{fig:models}}
 \end{center}
\end{figure}
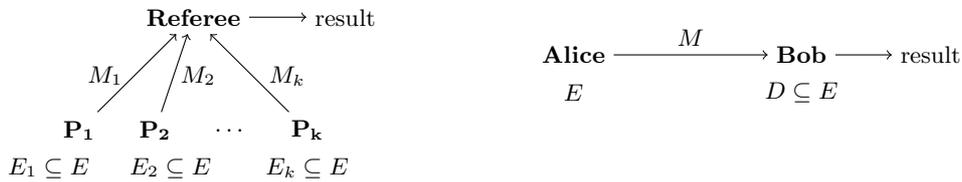

\vspace{0.1cm}
\textbf{Our Techniques.}
To prove our lower bound, we identify that an insertion-deletion streaming algorithm for \textsf{Maximum Matching}
or \textsf{Minimum Vertex Cover} can be used to obtain a one-way two-party communication protocol for a
two-dimensional variant of the well-known \textsf{Augmented Index} problem that we denote by \textsf{Augmented Bi-Index}, 
or \textsf{BInd} in short.
In an instance of \textsf{BInd}, Alice holds an $n$-by-$n$ binary matrix $A \in \{0, 1\}^{n \times n}$. 
Bob is given a position $(x,y) \in [n-k]^2$
and needs to output the bit $A_{x,y}$. Besides $(x, y)$, he also knows the $k$-by-$k$ submatrix of $A$ with upper left
corner at position $(x, y)$, however with the bit at position $(x,y)$ missing - we will denote this $k$-by-$k$ submatrix 
with $(x, y)$ missing by $A_{S(x, y)}$. We show that this problem has a one-way communication complexity of $\Omega((n-k)^2)$
by giving a reduction from the \textsf{Augmented Index} problem.

To obtain a lower bound for \textsf{Maximum Matching}, we show that Alice and Bob can construct a protocol for \textsf{BInd}
given an insertion-deletion streaming algorithm for \textsf{Maximum Matching}. In our reduction, we will consider instances
with $k = n - \Theta(n^{1-\epsilon})$, for some $\epsilon > 0$. Consider the following attempt: Suppose that the input matrix 
$A$ is a uniform random binary matrix and that $A_{x,y} = 1$ (we will get rid of these assumptions later). 
Alice and Bob interpret the matrix $A$ as the incidence matrix of a bipartite graph $G$. Bob interprets 
the ``1'' entries in the submatrix $A_{S(x,y)}$ outside the diagonal, i.e., all ``1'' entries except those in positions
$\{ (x+j, y+j) \ : \ 0 \le j < k \}$, as edge deletions $F$. 
The graph $G - F$ has a large matching: Since the diagonal of $A_{S(x,y)}$ is not deleted, and each entry in the diagonal is $1$
with probability $1/2$, we expect that half of all potential edges in the diagonal of $S(x,y)$ are contained in $G - F$ and thus form a matching of size 
$\Theta(k) = \Theta(n - n^{1-\epsilon})$. An $n^\epsilon$-approximation algorithm for 
\textsf{Maximum Matching} would therefore report $\Omega(n^{1-\epsilon})$ of these edges. Suppose that the
algorithm reported $\Omega(n^{1-\epsilon})$ uniform random edges from the diagonal in $A_{S(x, y)}$ (we will also get rid of this
assumption). Then, by repeating this scheme $\Theta(n^\epsilon)$ times in parallel, with large constant probability
the edge corresponding to $A_{x,y}$ is reported at least once, which allows us to solve \textsf{BInd}. 
This reduction yields an optimal $\Omega(n^{2-3\epsilon})$ space lower bound for insertion-deletion streaming algorithm for 
\textsf{Maximum Matching}, since $\Theta(n^{\epsilon})$ parallel executions are used to solve a problem that has a lower bound 
of $\Omega((n-k)^2) = \Omega(n^{2-2\epsilon})$.

In the description above, we assumed that (1) $A$ is a uniform random binary matrix; (2) $A_{x,y} = 1$; and (3) the algorithm outputs
uniform random positions from the diagonal of $A_{S(x, y)}$. To eliminate (1) and (2), Alice and Bob first sample
a uniform random binary matrix $X \in \{0, 1\}^{n \times n}$ from public randomness and consider the matrix obtained 
by computing the entry-wise XOR between $A$ and $X$, i.e., matrix $A \oplus X$, instead. Observe that $A \oplus X$ 
is a uniform random binary matrix (independently of $A$), and with probability $\frac{1}{2}$, property (2), i.e., $(A \oplus X)_{x,y} = 1$, holds. 
Regarding assumption (3), besides computing the XOR $A \oplus X$, Alice and Bob also sample two random 
permutations $\sigma_1, \sigma_2: [n] \rightarrow [n]$ from public randomness. Alice and Bob permute the rows 
and columns of $A \oplus X$ using $\sigma_1$ and $\sigma_2$, respectively.
Then, no matter which elements from the permuted relevant diagonal of $A \oplus X$ are reported by the algorithm, 
due to the random permutations, these elements could have originated from any other position in this diagonal. 
This in turn makes every element along the diagonal equally likely to be reported, including the position $(x,y)$ 
(in the unpermuted) matrix that we are interested in.

Our reduction for \textsf{Minimum Vertex Cover} is similar but simpler. We show that only a constant number of parallel executions
of an insertion-deletion streaming are required.

\vspace{0.1cm}
\textbf{Further Related Work.}
Hosseini et al. \cite{hly19} were able to improve on the ``triple exponential length'' requirement of the input streams
for a characterization of insertion-deletion streaming algorithms in terms of linear
sketches by Li et al. \cite{lnw14} and Ai et al. \cite{ahlw16}. They showed that in the case of XOR-streams 
and $0/1$-output functions, input streams of length $\Order(n^2)$ are enough.

Very recently, Kallaugher and Price \cite{kp20} showed that if either the stream length or the maximum value 
of the stream (e.g. the maximum multiplicity of an edge in a graph stream) are substantially restricted, 
then the characterization of turnstile streams as linear sketches cannot hold. For these situations they 
discuss problems where linear sketching is exponentially harder than turnstile streaming.

Besides the \textsf{Maximum Matching} problem, the only other separation result between the insertion-only and
the insertion-deletion graph stream models that we are aware of is a recent result by Konrad \cite{k19}, who showed that 
approximating large stars is significantly harder in insertion-deletion streams.

\vspace{0.1cm}
\textbf{Outline.}
We give a lower bound on the communication complexity of \textsf{Augmented Bi-Index} in Section~\ref{sec:aug-bi-index}.
Then, in Section~\ref{sec:matching}, we show that a one-way two-party communication protocol for \textsf{Maximum Matching} 
can be used to solve \textsf{Augmented Bi-Index}, which yield an optimal space lower bound for \textsf{Maximum Matching} 
in insertion-deletion streams. We conclude with a similar reduction for \textsf{Minimum Vertex Cover} in 
Section~\ref{sec:vertex-cover}, which also implies an optimal space lower bound for \textsf{Minimum Vertex Cover} in 
insertion-deletion streams.

\section{Augmented Bi-Index} \label{sec:aug-bi-index}
In this section, we define the one-way two-party communication problem \textsf{Augmented Bi-Index} and prove a
lower bound on its communication complexity. 

\begin{problem}[\textsf{Augmented Bi-Index}]
    In an instance of {\em \textsf{Augmented Bi-Index}} {\em $\textsf{BInd}^{n,k}_\delta$} we have two players denoted Alice and Bob.
    Alice holds a binary matrix $A \in \lbrace 0,1 \rbrace^{n \times n}$. Bob holds indices $x, y \in [n-k]$ and the incomplete\footnote{We use $A_S$ to refer to the collection of entries indexed by the set $S$, so $A_S = (A_{i,j})_{(i,j)\in S}$.} binary matrix $A_{S(x,y)}$ where
    \[
        S(x,y) = \{ (i,j) \in [n]^2 \given (x \leq i < x + k) \text{ and } (y \leq j < y + k) \} \setminus \{ (x,y) \} \ .
    \]    
    Alice sends a single message $M$ to Bob who must output $A_{x, y}$ with probability at least $1-\delta$.
\end{problem}

Our lower bound proof consists of a reduction from the well-known \textsf{Augmented Index} problem, which is known to 
have large communication complexity.


\begin{problem}[\textsf{Augmented Index}]
    In an instance of {\em \textsf{Augmented Index}} {\em $\textsf{Ind}^{n}_\delta$} we have two players denoted Alice and Bob.
    Alice holds a binary vector $V \in \lbrace 0,1 \rbrace^n$. Bob holds an index $\ell \in [n]$ and the vector suffix $V_{>\ell} = (V_{\ell+1}, V_{\ell+2}, \cdots, V_n)$. 
    Alice sends a single message $M$ to Bob who must output $V_\ell$ with probability at least $1-\delta$.
\end{problem}

As a consequence of Lemma~13 in \cite{mnsw98}, we can see that this problem has linear communication complexity (see also Lemma~2 in \cite{bjkk04} for a more direct proof technique).

\begin{theorem}[e.g. \cite{mnsw98}]\label{thm:augind-lower}
    For $\delta < 1/3$, any randomised one-way communication protocol which solves $\textsf{Ind}^{n}_\delta$ must communicate $\Omega(n)$ bits.
\end{theorem}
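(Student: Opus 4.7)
The plan is to prove Theorem~\ref{thm:augind-lower} via a standard information-theoretic argument in the style of Miltersen, Nisan, Safra and Wigderson. I would take $V$ to be uniformly distributed on $\{0,1\}^n$ and let $M$ denote Alice's (random) message on input $V$. The goal is to show that $H(M) = \Omega(n)$, which since $|M| \geq H(M)$ gives the claimed bit-complexity lower bound.

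The first step is to apply Fano's inequality coordinate by coordinate. For each fixed $\ell \in [n]$, the protocol recovers $V_\ell$ from $(M, \ell, V_{>\ell})$ with error at most $\delta$, so Fano yields
\[
H(V_\ell \given M, V_{>\ell}) \;\leq\; h(\delta),
\]
where $h$ denotes the binary entropy function. Because $V$ is uniform, $V_\ell$ has unit entropy and is independent of $V_{>\ell}$, hence
\[
I(V_\ell; M \given V_{>\ell}) \;=\; H(V_\ell \given V_{>\ell}) - H(V_\ell \given M, V_{>\ell}) \;\geq\; 1 - h(\delta).
\]

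The second step is the chain rule for mutual information, decomposed in reverse index order:
\[
I(V; M) \;=\; \sum_{\ell=1}^{n} I\bigl(V_\ell ;\, M \given V_{\ell+1}, \ldots, V_n\bigr) \;\geq\; n\bigl(1 - h(\delta)\bigr).
\]
Since $|M| \geq H(M) \geq I(V; M)$ and $1 - h(\delta) > 1 - h(1/3) > 0$ for any $\delta < 1/3$, we obtain the desired $\Omega(n)$ bound.

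The only point requiring genuine care is the randomisation in $\Pi$: Alice's message depends on both $V$ and some random string $R$, and Bob's output is a function of $(M, \ell, V_{>\ell})$ together with his own coins. I would handle this up front by invoking Yao's minimax principle to fix the randomness in an adversarially worst way, reducing to a deterministic protocol whose error on the uniform distribution over $V$ is at most $\delta$ for every $\ell$. Once this reduction is in place the information-theoretic computation above applies unchanged, and no further calculation is needed.
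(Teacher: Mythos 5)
The paper does not actually prove Theorem~\ref{thm:augind-lower}; it simply cites it, pointing to Lemma~13 of \cite{mnsw98} and to Lemma~2 of \cite{bjkk04} for a direct argument. So there is no proof in the paper to compare against, and you are effectively supplying one. Your core computation --- Fano coordinate by coordinate, the chain rule in reverse index order, and $|M|\ge H(M)\ge I(V;M)\ge n\bigl(1-h(\delta)\bigr)$ --- is the standard information-theoretic route for \textsf{Augmented Index} (this is essentially the argument of Lemma~2 in \cite{bjkk04}), and that part is correct.

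The one place where you overclaim is the derandomization. Yao's minimax principle applied to the input distribution ``uniform over $(V,\ell)$'' yields a \emph{deterministic} protocol whose error is at most $\delta$ \emph{averaged} over both $V$ and $\ell$; it does not produce a single deterministic protocol with error at most $\delta$ on uniform $V$ \emph{for every} $\ell$ simultaneously, which is what your sentence asserts. (Applying Yao separately per $\ell$ gives a different protocol for each $\ell$, which is useless since Alice does not see $\ell$.) Two standard repairs close the gap with no change to the rest of your calculation. (i) Do not derandomize at all: let $R$ denote the protocol's coins, use $\Pr_{V,R}[\hat V_\ell\neq V_\ell]\le\delta$ for each fixed $\ell$, and carry $R$ through the conditioning, giving $H(V_\ell\mid M,R,V_{>\ell})\le h(\delta)$, hence $I(V;M,R)\ge n\bigl(1-h(\delta)\bigr)$; since $V$ and $R$ are independent this equals $I(V;M\mid R)\le H(M)$. (ii) Keep the Yao step but with $\ell$ uniform: writing $\delta_\ell$ for the resulting per-$\ell$ error of the deterministic protocol, you get $\frac1n\sum_\ell\delta_\ell\le\delta$, Fano gives $H(V_\ell\mid M,V_{>\ell})\le h(\delta_\ell)$ for every $\ell$, and concavity of $h$ together with monotonicity on $[0,1/2]$ yields $\frac1n\sum_\ell h(\delta_\ell)\le h\bigl(\frac1n\sum_\ell\delta_\ell\bigr)\le h(\delta)$, recovering the same $\Omega(n)$ bound.
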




We are now ready to prove our lower bound for \textsf{Augmented Bi-Index}.
\begin{theorem}\label{thm:bind-lower}
    For $\delta < 1/3$, any randomised one-way communication protocol which solves $\textsf{BInd}^{n,k}_\delta$ must communicate $\Omega((n-k)^2)$ bits.
\end{theorem}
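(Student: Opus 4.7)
The plan is to reduce from $\textsf{Ind}^{(n-k)^2}_\delta$ to $\textsf{BInd}^{n,k}_\delta$, so that the linear lower bound of Theorem~\ref{thm:augind-lower} immediately yields the desired quadratic lower bound. Writing $m = n-k$, I will embed Alice's $m^2$-bit vector into the top-left $m \times m$ block of her matrix, arranged so that Bob's index $\ell$ corresponds to some cell $(x,y) \in [m]^2$, and so that every other cell inside $S(x,y)$ is either outside the embedded block (and hence has a value Alice and Bob fix by convention) or lies strictly after $\ell$ in the order used for the embedding (so Bob already knows its value from his suffix).

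The key step is choosing the embedding. I would use the lexicographic bijection $\pi:[m]^2 \to [m^2]$ defined by $\pi(i,j) = (i-1)m + j$. On input $V \in \{0,1\}^{m^2}$, Alice builds $A \in \{0,1\}^{n\times n}$ by setting $A_{i,j} = V_{\pi(i,j)}$ for $(i,j) \in [m]^2$ and $A_{i,j} = 0$ for $(i,j) \in [n]^2 \setminus [m]^2$, and then runs Alice's part of any hypothetical $\textsf{BInd}^{n,k}_\delta$ protocol on $A$, forwarding the resulting message $M$ to Bob. On input $(\ell, V_{>\ell})$, Bob sets $(x,y) = \pi^{-1}(\ell)$; since $\ell \in [m^2]$ we indeed have $(x,y) \in [m]^2 = [n-k]^2$, as the problem requires.

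It remains to check that Bob can reconstruct $A_{S(x,y)}$ from $V_{>\ell}$ alone. Fix $(i,j) \in S(x,y)$, so $x \le i < x+k$, $y \le j < y+k$, and $(i,j) \ne (x,y)$. If $(i,j) \notin [m]^2$, then $A_{i,j} = 0$ by Alice's convention. Otherwise $(i,j) \in [m]^2$ with $i \ge x$, $j \ge y$, and $(i,j) \ne (x,y)$; lexicographic order then forces $\pi(i,j) > \pi(x,y) = \ell$, so $A_{i,j} = V_{\pi(i,j)}$ is a coordinate of $V_{>\ell}$ and known to Bob. Having assembled $A_{S(x,y)}$, Bob runs Bob's part of the BInd protocol to obtain $A_{x,y} = V_\ell$ with probability at least $1-\delta$, thereby solving $\textsf{Ind}^{m^2}_\delta$ using exactly $|M|$ bits of communication. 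Theorem~\ref{thm:augind-lower} then gives $|M| = \Omega(m^2) = \Omega((n-k)^2)$.

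The only potential pitfall is picking an ordering compatible with the shape of $S(x,y)$: one needs a total order on $[m]^2$ under which the componentwise product order $\{(i,j) : i \ge x, j \ge y\} \setminus \{(x,y)\}$ is contained in the strict up-set of $(x,y)$, for every $(x,y) \in [m]^2$. Any linear extension of the product order works, and lexicographic order is the cleanest choice. Once this is in place the verification is purely combinatorial, with no probabilistic arguments needed beyond inheriting the failure probability $\delta$ from the assumed BInd protocol.
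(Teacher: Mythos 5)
Your proof is correct and takes essentially the same approach as the paper: both embed $V$ into the top-left $(n-k)\times(n-k)$ block via the lexicographic bijection $\pi(i,j) = (i-1)(n-k) + j$, pad the rest of $A$ with zeros, and observe that every cell of $S(x,y)$ lying inside the block is lexicographically after $(x,y)$, hence recoverable from $V_{>\ell}$. Your write-up is slightly more explicit about verifying the order property, but the reduction and the resulting $\Omega((n-k)^2)$ bound are identical.
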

\begin{proof}
    Let $\mathcal{P}$ be a communication protocol for $\textsf{BInd}^{n,k}_\delta$ that uses messages of length at most 
    $S(n, k)$ bits. We will show how $\mathcal{P}$ can be used to solve $\textsf{Ind}^{(n-k)^2}_\delta$ with the same message size.
    
    Let $V, \ell$ be any instance of $\textsf{Ind}^{(n-k)^2}_\delta$. Alice builds the matrix $A \in \{ 0, 1 \}^{n \times n}$ by placing the bits of $V$ in lexicographical order in the top-left $(n-k)$-by-$(n-k)$ region:
    \[
        A_{i,j} =   \begin{cases}
                        V_{j+(n-k)(i-1)} &\text{for } i,j \in [n-k]\\
                        0 &\text{otherwise}
                    \end{cases} \ .
    \]    
    This packing is illustrated in Figure~\ref{fig:aug-index}(a).
    
    Alice runs protocol $\mathcal{P}$ on $A$ and sends the resulting message $M$ to Bob.    
    Now, Bob has the message $M$, the index $\ell \in [(n-k)^2]$ and the suffix $V_{>\ell}$. Let $x, y \in [n-k]$ be the unique pair of integers such that $\ell = y + (n-k)(x-1)$. Observe that $A_{x,y} = V_\ell$.
    
    For Bob to be able to complete protocol $\mathcal{P}$ he needs to provide $A_{S(x,y)}$. Because of the way we packed the entries of $V$ onto $A$, the overlap between $V$ and $A_{S(x,y)}$ is a subset of the entries of $V_{>\ell}$ (see Figure~\ref{fig:aug-index}(b) for an illustration). Therefore Bob can complete the protocol and determine $A_{x,y} = V_\ell$ with probability at least $1-\delta$.    
    By Theorem~\ref{thm:augind-lower}, it must be that $S(n, k) = \Omega((n-k)^2)$.
\end{proof}

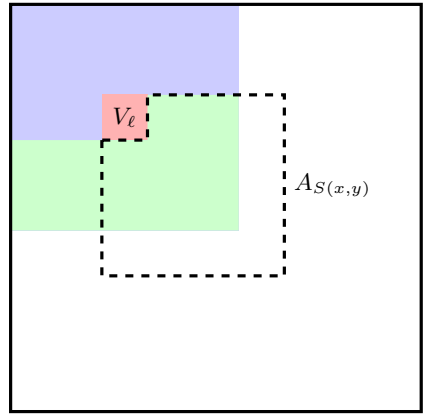
\begin{figure}
    \centering
    \hspace{1cm}
    \subcaptionbox{Example packing of the bits of $V$ into matrix $A$ with $n=9$ and $k=4$.}[0.4\textwidth]{
        \begin{tikzpicture}[
            cell/.style={rectangle, draw=black!35!white, minimum size =6mm, inner sep=0mm}
        ]
            \fill[blue!20!white] (3mm, -3mm) rectangle (33mm, -33mm);
            \foreach \y in {1,...,5}
            {
                \foreach \x in {1,...,5}
                {                
                    \pgfmathtruncatemacro{\i}{int(int((\y-1)*5 + \x))}
                    \node[cell] at (6*\x mm, -6*\y mm) {\small $V_{\i}$};
                }
                \foreach \x in {6,...,9} \node[cell] at (6*\x mm, -6*\y mm) {$0$};
            }
            \foreach \y in {6,...,9}
            {
                \foreach \x in {1,...,9} \node[cell] at (6*\x mm, -6*\y mm) {$0$};
            }
            \draw[black, very thick] (3mm, -3mm) rectangle (57mm, -57mm);
            \draw[black, very thick] (3mm, -3mm) rectangle (33mm, -33mm);
        \end{tikzpicture}
    }
    \hspace{1cm}
    \subcaptionbox{Bob can construct the area $A_{S(x,y)}$ given $V_{>\ell}$, which is part of his input.}[0.4\textwidth]{
        \begin{tikzpicture}[
            cell/.style={rectangle, minimum size =6mm, inner sep=0mm}
        ]
            \fill[blue!20!white] (3mm, -3mm) rectangle (33mm, -33mm);
            \node[cell, fill=red!30!white] at (18mm, -18mm) {\small $V_\ell$};
            \fill[green!20!white] (3mm, -21mm) -- (3mm, -33mm) -- (33mm, -33mm) -- (33mm, -15mm) -- (21mm, -15mm) -- (21mm, -21mm) -- (3mm, -21mm);
            \draw[very thick, dashed] (15mm, -21mm) -- (15mm, -39mm) -- (39mm, -39mm) -- node[right] {$A_{S(x,y)}$} (39mm, -15mm) -- (21mm, -15mm) -- (21mm, -21mm) -- (15mm, -21mm);
            \draw[black, very thick] (3mm, -3mm) rectangle (57mm, -57mm);
        \end{tikzpicture}
    }
    \caption{The construction of $A$ and $A_{S(x,y)}$ in Theorem~\ref{thm:bind-lower}.}
    \label{fig:aug-index}
\end{figure}

\section{Maximum Matching} \label{sec:matching}
Let $\mathbf{A}$ be a $C$-approximation insertion-deletion streaming algorithm for \textsf{Maximum Matching} that errs 
with probability at most $1/10$. We will now show that $\mathbf{A}$ can be used to solve $\textsf{BInd}^{n,k}_{\delta}$.

\subsection{Reduction}
Let $A \in \{0, 1\}^{n \times n}, x \in [n-k]$ and  $y \in [n-k]$ be an instance of $\textsf{BInd}^{n,k}_{\delta}$. 
Alice and Bob first sample a uniform random binary matrix $X \in \{0,1\}^{n \times n}$ 
and random permutations $\sigma_1, \sigma_2: [n] \rightarrow [n]$ from public randomness. Alice then computes matrix 
$A'$ which is obtained by first computing the entry-wise XOR of $A$ and $X$, denoted by $A \oplus X$, 
and then by permuting the rows and columns
of the resulting matrix by $\sigma_1$ and $\sigma_2$, respectively. Next, Alice interprets $A'$ as the incidence matrix
of a bipartite graph $G(A')$. Alice runs algorithm $\mathbf{A}$ on a random ordering of the edges of $G(A')$ and sends
the resulting memory state to Bob.

Next, Bob also computes the entry-wise XOR between the part of the matrix $A$ that he knows about, $A_{S(x,y)}$, and $X$, followed by applying
the permutations $\sigma_1$ and $\sigma_2$. In doing so, Bob knows the matrix entries of $A'$ at positions 
$(\sigma_1(i), \sigma_2(j))$
for every $(i,j) \in S(x,y)$. He can therefore compute the subset $E_S$ of the edges of $G(A')$ with

$$E_S = \{(\sigma_1(i),\sigma_2(j)) \in [n]^2 \ | \ (i,j) \in S(x,y) \mbox{ and } A'(\sigma_1(i),\sigma_2(j)) = 1  \} \ .$$

Furthermore, let $E_{diag} \subseteq E_S$ be the set of edges $(\sigma_1(i), \sigma_2(j))$ so that $(i,j)$ lies on the same
diagonal in $A$ as $(x,y)$, or, in other words, there exists an integer $1 \le q \le k-1$ such that $(x+q, y+q) = (i, j)$.
Then, let $E_{del} = E_S \setminus E_{diag}$.
Bob continues the execution of algorithm $\mathbf{A}$, as follows: for every edge $e \in E_{del}$, 
Bob introduces an edge deletion of $e$, in random order.

Let $M'$ be the matching returned by $\mathbf{A}$.
From $M'$ Bob computes the matching $M$ as follows: If $|M' \le 0.99 \frac{k}{2C}|$ then Bob sets $M = \varnothing$.
Otherwise, Bob sets $M$ to be a uniform random subset of $M'$ of size exactly $0.99 \frac{k}{2C}$.

\textbf{Parallel Executions.} Alice and Bob execute the previous process $\ell = 100 \cdot C$ times in parallel. Let $M^i$, $X^i$, 
$\sigma_1^i$ and $\sigma_2^i$ be $M$, $X$, $\sigma_1$ and $\sigma_2$ that are used in run $i$, respectively.
Let $Q_i$ be the indicator random variable that is $1$ iff $M^i$ contains the edge $(\sigma^i_1(x), \sigma^i_2(y))$. 
We also define $p = \sum_i Q_i$ to be the total number of times the edges $(\sigma^i_1(x), \sigma^i_2(y))$ are reported. 
Whenever the edge $(\sigma^i_1(x), \sigma^i_2(y))$ is reported, we interpret this to be a claim that $A_{x,y} = \neg X^i_{x,y}$. So depending on the value of $X^i_{x,y}$, this acts as a claim that $A_{x,y} = 0$ or $A_{x,y} = 1$. We define $p_0 = \sum_{i: Q_i = 1} X^i_{x, y}$ (which counts how often
$A_{x,y} = 0$ was claimed) and let $p_1 = p - p_0$ (the number of times $A_{x,y} = 1$ was claimed). Bob outputs $1$ as his estimator for $A_{x,y}$ if $p_1 \ge p_0$ 
and $0$ otherwise.

\subsection{Analysis}
Let $G$ be the bipartite graph with incidence matrix $A \oplus X$, and let 
$$F = \{ (i,j) \in S(x,y) \ | \ (A \oplus X)_{i,j} = 1 \mbox{ and } \nexists \ q \mbox{ s.t. } (i,j) = (x+q, y+q) \} \ .$$
Then the graph $G - F$ is isomorphic to the graph $G(A') - E_{del}$. In particular, 
$G(A') - E_{del}$ is obtained from $G - F$ by relabeling the vertex sets of the two bipartitions
using the permutations $\sigma_1$ and $\sigma_2$. 

We will first bound the maximum matching size in $G(A') - E_{del}$. To this end, we will bound the 
maximum matching size in $G - F$, which is easier to do:

\begin{lemma}\label{lem:matching-size}
 With probability $1-\frac{1}{k^{10}}$, the graph $G(A') - E_{del}$ is such that:
 $$0.99 \frac{k}{2} \le \mu(G(A') - E_{del}) \le 1.01 \frac{k}{2} + 2(n-k) \ ,$$
 where $\mu(G)$ denotes the \emph{matching number} of $G$, i.e., the size of a maximum matching.
\end{lemma}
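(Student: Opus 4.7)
The plan is to work with $G - F$ in place of $G(A') - E_{del}$: these graphs are isomorphic (one is obtained from the other by relabeling the two vertex classes through $\sigma_1$ and $\sigma_2$), so they have equal matching number. Partition the row indices into $R_1 = [x, x+k)$ and $R_2 = [n] \setminus R_1$, and the columns similarly into $C_1 = [y, y+k)$ and $C_2 = [n] \setminus C_1$. The key structural observation is that, by the definition of $F$, the only edges of $G - F$ that lie entirely inside the window $R_1 \times C_1$ are the diagonal entries $(x+q, y+q)$ for $0 \le q < k$, each present iff $(A \oplus X)_{x+q, y+q} = 1$. Since $X$ is uniformly random and independent of $A$, these $k$ bits are i.i.d.\ uniform in $\{0,1\}$, so the number $D$ of surviving diagonal edges satisfies $D \sim \mathrm{Bin}(k, 1/2)$.

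For the lower bound, the $D$ diagonal edges already form a matching in $G - F$ (they use $2D$ distinct vertices), so $\mu(G - F) \ge D$. For the upper bound, I would exhibit an explicit vertex cover: $R_2 \cup C_2$ covers every edge of $G - F$ with at least one endpoint outside the window, and adding one endpoint of each of the $D$ surviving diagonal edges covers everything that remains. This yields a vertex cover of size $2(n-k) + D$, so by K\"onig's theorem for bipartite graphs $\mu(G - F) \le 2(n-k) + D$.

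It then remains to show that $D$ concentrates around $k/2$: a standard Chernoff bound gives $\Pr[\,|D - k/2| > 0.01 \cdot k/2\,] \le 2 \exp(-\Omega(k))$, which is at most $k^{-10}$ for $k$ large enough (and $k \to \infty$ in the intended regime $k = n - \Theta(n^{1-\epsilon})$). A union bound on the two tails, combined with the sandwich $D \le \mu(G - F) \le 2(n-k) + D$, yields the claim. The only delicate step is the structural one in the first paragraph: one has to carefully track that $F$ removes exactly the non-diagonal ``$1$''-entries in the $k \times k$ window while leaving the whole diagonal (including the possible edge at $(x,y)$) intact. Everything after that is a routine combination of K\"onig's theorem with a binomial tail bound, and there is no serious obstacle.
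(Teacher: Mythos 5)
Your proof is correct and follows essentially the same approach as the paper: work with $G-F$ by isomorphism, observe that only the $k$ diagonal positions survive inside the window (each present independently with probability $1/2$), bound the matching number between $D$ and $D + 2(n-k)$, and apply a Chernoff bound. The only cosmetic difference is that for the upper bound you exhibit an explicit vertex cover and invoke K\"onig's theorem, whereas the paper counts $|\Gamma(U')|$ and bounds how many vertices of $U'$ can be matched; both are standard and yield the same $1.01\,k/2 + 2(n-k)$ bound.
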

\begin{proof}
 We will consider the graph $G - F$ instead, since it is isomorphic to $G(A') - E_{del}$ and has the same
 maximum matching size.
 
 First, observe that $G$ is a random bipartite graph where every edge is included with probability $\frac{1}{2}$.
 Let $U$ and $V$ denote the bipartitions in $G$, and consider the subsets $U' = [x, x+k)$ and $V' = [y, y+k)$.
 Observe that in the vertex induced subgraph $G[U' \cup V']$
 all edges are deleted in $F$ except those that connect 
 the vertices $x+i$ and $y+i$, for every $0 \le i \le k-1$. By a Chernoff bound, the number of edges and thus
 the maximum matching size in $G[U' \cup V']$ is bounded by:
 $$0.99 \cdot \frac{k}{2} \le \mu(G[U' \cup V']) \le 1.01 \cdot \frac{k}{2} \ , $$
 with probability $1 - \frac{1}{k^{10}}$.
 
 Observe that, with probability $1 - \frac{1}{k^{10}}$, the neighborhood $\Gamma(U')$ is such that 
 $$0.99 \cdot \frac{k}{2} \le |\Gamma(U')| \le 1.01 \cdot \frac{k}{2} + (n-k) \ .$$ 
 The set $U'$ can therefore be matched to at most $1.01 \cdot \frac{k}{2} + (n-k)$ vertices in $V$. 
 We thus obtain
 $$\mu(G - F) \le 1.01 \cdot \frac{k}{2} + 2(n-k) \ ,$$
 since we may also be able to match all $n-k$ vertices of $U \setminus U'$. 
\end{proof}

\begin{lemma}\label{lem:learn}
 Suppose that $M_i \neq \varnothing$. Then:
 \begin{eqnarray*}
\frac{0.99}{2C} - \frac{2(n - k)}{k} & \le & \Pr \left[ Q_i = 1 \right] \le \frac{0.99}{2C} \ .
 \end{eqnarray*}
 
\end{lemma}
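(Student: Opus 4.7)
My plan is to derive both inequalities from a single symmetry among the $k$ positions on the diagonal of the $k \times k$ block at $(x,y)$, combined with a structural bound on matchings in $G(A') - E_{del}$ analogous to Lemma~\ref{lem:matching-size}.

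\textbf{Step 1 (Symmetry among diagonal positions).} Write $D = \{(x+q,y+q) : 0 \le q \le k-1\}$ and $\sigma(D) = \{(\sigma_1(x+q),\sigma_2(y+q)) : 0 \le q \le k-1\}$. I aim to show that, conditional on $M_i \neq \varnothing$, for each $q \in \{0,\dots,k-1\}$
\[
    \Pr[(\sigma_1(x),\sigma_2(y)) \in M_i] \;=\; \Pr[(\sigma_1(x+q),\sigma_2(y+q)) \in M_i] .
\]
Let $\pi^{(1)}, \pi^{(2)}$ be the transpositions on $[n]$ swapping $x \leftrightarrow x+q$ and $y \leftrightarrow y+q$ respectively (identity elsewhere). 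Apply the substitution $(\sigma_1,\sigma_2,X) \mapsto (\sigma_1 \circ \pi^{(1)}, \sigma_2 \circ \pi^{(2)}, X')$ where $X'_{i,j} = A_{i,j} \oplus A_{\pi^{(1)}(i),\pi^{(2)}(j)} \oplus X_{\pi^{(1)}(i),\pi^{(2)}(j)}$. Since $\pi^{(1)}, \pi^{(2)}$ are involutions, this is a bijection on the uniform product distribution. A direct calculation yields $(A \oplus X')_{i,j} = (A \oplus X)_{\pi^{(1)}(i),\pi^{(2)}(j)}$, and since $[x,x+k)$, $[y,y+k)$ and $D$ are each invariant under their respective transpositions, both the edge set of $G(A')$ and the deletion set $E_{del}$ are unchanged as subsets of $[n]^2$. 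Hence the stream seen by $\mathbf{A}$, the returned matching $M'$, and the subsample $M_i$ (for fixed subsampling randomness) are all unchanged, and the event $\{M_i \neq \varnothing\}$ is preserved. Yet $(\sigma_1(x+q),\sigma_2(y+q))$ becomes $((\sigma_1 \circ \pi^{(1)})(x),(\sigma_2 \circ \pi^{(2)})(y))$ after the substitution, giving the claimed equality.

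\textbf{Step 2 (From symmetry to the bounds).} Summing the symmetry equality over $q$ and using linearity,
\[
    k \cdot \Pr[Q_i = 1] \;=\; \E\bigl[\,|M_i \cap \sigma(D)|\,\bigr] ,
\]
with both sides understood conditionally on $M_i \neq \varnothing$. The upper bound is then immediate: $|M_i \cap \sigma(D)| \le |M_i| = 0.99\,k/(2C)$, so $\Pr[Q_i = 1] \le 0.99/(2C)$. For the lower bound, observe that every edge in $G(A') - E_{del}$ either lies in $\sigma(D)$ or has an endpoint in $[n]\setminus\sigma_1([x,x+k))$ or in $[n]\setminus\sigma_2([y,y+k))$; each of these complement sets has size $n-k$, so any matching can contain at most $2(n-k)$ such non-diagonal edges. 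Therefore $|M' \cap \sigma(D)| \ge |M'| - 2(n-k)$, and since $M_i$ is a uniform random size-$|M_i|$ subset of $M'$, linearity gives $\E[|M_i \cap \sigma(D)|] = |M_i| \cdot |M' \cap \sigma(D)|/|M'| \ge |M_i| - 2(n-k)$. Dividing by $k$ yields $\Pr[Q_i = 1] \ge 0.99/(2C) - 2(n-k)/k$.

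\textbf{Main obstacle.} The delicate step is verifying that the substitution in Step~1 preserves the stream $\mathbf{A}$ sees exactly, and not merely in distribution. The somewhat unusual definition of $X'$ is calibrated precisely so that $(A \oplus X)$ at the transposed positions coincides with $(A \oplus X')$ at the original positions; this is exactly what is needed to keep every insertion and deletion pinned at the same vertex pair after relabelling the diagonal via $\pi^{(1)}, \pi^{(2)}$.
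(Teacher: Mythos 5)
Your proof is correct and takes essentially the same approach as the paper: both arguments exploit the symmetry of the $k$ diagonal positions under the random row/column permutations, then bound the number of diagonal edges in $M_i$ by $|M_i|$ from above and by $|M_i|-2(n-k)$ from below. The paper simply asserts this symmetry informally ("every entry in the diagonal ... is claimed with the same probability"), whereas you make it rigorous by exhibiting the explicit measure-preserving involution (transposition on the diagonal indices combined with an XOR correction to $X$) — a welcome tightening, but not a different route.
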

\begin{proof}
 First, by construction of our reduction, since $M_i \neq \varnothing$ we have 
 $|M_i| = 0.99 \frac{k}{2C}$. 
 Let 
 $$U'_i = \sigma_1^i([x, x+k)) \mbox{ and } V'_i = \sigma_2^i([y, y+k)) \ .$$ 
 Let $\tilde{M}_i$ be the set of edges of $M_i$ connecting vertices in $U'_i$ to $V'_i$. Observe that there are
 $2(n-k)$ vertices in the graph outside the set $U'_i \cup V'_i$. We thus 
 have  
 $$|M_i| - 2(n-k) \le |\tilde{M}_i| \le |M_i|  \ .$$ 
 Next, since the permutations $\sigma^i_1, \sigma^i_2$ are chosen uniformly at random, any edge of $\tilde{M}_i$
 may have originated from any of the diagonal entries in $A_{S(x, y)}$. Hence, $\tilde{M}_i$ claims the bits
 of at least $|M_i| - 2(n-k)$ and at most $|M_i|$ uniform random positions in the diagonal of $A_{S(x, y)}$.
 Every entry in the diagonal of $A_{S(x,y)}$ is thus claimed with the same probability.
 Since the diagonal of $A_{S(x,y)}$ is of length $k$, this probability is at least
 $$ 
 \frac{|M_i| - 2(n-k)}{k} = \frac{0.99 \frac{k}{2C} - 2(n-k)}{k} = \frac{0.99}{2C} - \frac{2(n - k)}{k} \ ,
 $$
 and at most
 $$\frac{|M_i|}{k} = \frac{0.99 \frac{k}{2C}}{k} = \frac{0.99}{2C} \ .$$ 
\end{proof}



\begin{theorem}\label{thm:reduction} 
 Let $\mathbf{A}$ be a $n^{\epsilon}$-approximation insertion-deletion streaming algorithm 
 for \textsf{Maximum Matching} that errs with probability at most $1/10$ and uses space $s$. Then there exists a communication 
 protocol for $\textsf{BInd}^{n,n - \frac{1}{40} n^{1-\epsilon}}_{0.05}$ that communication $\Order(n^\epsilon \cdot s)$ bits.
\end{theorem}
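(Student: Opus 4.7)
The communication bound is immediate: for each of the $\ell = 100 C = 100 n^\epsilon$ parallel instances, Alice sends the $\Order(s)$-bit memory state of $\mathbf{A}$ after processing its random insertion stream, giving $\Order(\ell \cdot s) = \Order(n^\epsilon s)$ bits in total; the public randomness $(X^i, \sigma_1^i, \sigma_2^i)$ costs nothing. The bulk of the plan analyses Bob's majority vote. I will call run $i$ a \emph{correct vote} if $Q_i = 1$ and the target edge $(\sigma_1^i(x), \sigma_2^i(y))$ is truly present in the run-$i$ graph (so the interpretation $A_{x,y} = \neg X^i_{x,y}$ is right), and a \emph{wrong vote} if $Q_i = 1$ but the target edge is absent (so the interpretation is flipped). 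It suffices to show that correct votes strictly outnumber wrong votes with probability at least $0.95$.

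For the lower bound on correct votes, plugging $k = n - \tfrac{1}{40} n^{1-\epsilon}$ and $C = n^\epsilon$ into Lemma~\ref{lem:matching-size} gives matching number at least $0.99 k/2$ with overwhelming probability; whenever $\mathbf{A}$ does not err (probability at least $9/10$) it therefore returns a \emph{valid} matching of size at least $0.99 k/(2C)$, and in particular $M^i \neq \varnothing$. Lemma~\ref{lem:learn} then yields $\Pr[Q_i = 1 \mid M^i \neq \varnothing] \geq \tfrac{0.99}{2C} - \tfrac{2(n-k)}{k} \geq \tfrac{0.4}{C}$ for sufficiently large $n$. Because validity forces $Q_i = 1$ to mean the target edge really exists, any non-erring run with $Q_i = 1$ yields a correct vote, so $\Pr[\text{correct vote in run } i] \geq (9/10)(0.4/C) \geq 0.36/C$. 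Summing over the $\ell = 100C$ independent runs, the total number of correct votes is a sum of independent Bernoullis with mean at least $36$, so a standard Chernoff bound gives at least one correct vote with probability $\geq 1 - e^{-36}$.

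The main obstacle is bounding the wrong votes: the naive estimate $\Pr[\text{wrong vote}] \leq \Pr[\mathbf{A} \text{ errs}] \leq 1/10$ allows $\E[\text{wrong votes}] \leq 10C$ and would swamp the correct votes once $C$ is large, so I need to exploit the permutation symmetry. Conditioning on everything the algorithm sees in run $i$, the view determines $U'_i, V'_i$ and the visible partial matching $D_\text{exist}$ of existing diagonal edges; however, since $\sigma_1^i, \sigma_2^i$ are uniform, the hidden pairing between the unmatched vertices $U'_\text{unmatch} = U'_i \setminus V(D_\text{exist})$ and $V'_\text{unmatch}$ is conditionally uniform over all perfect matchings between these two sets. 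When the target edge is absent it is precisely one edge of this hidden matching, so its marginal distribution is uniform on the $\Theta(k^2)$-sized set $U'_\text{unmatch} \times V'_\text{unmatch}$. The matching $M^i$ has only $0.99 k/(2C)$ edges, so it contains at most that many pairs from $U'_\text{unmatch} \times V'_\text{unmatch}$; hence the conditional probability of $M^i$ hitting the target is $\Order(1/(Ck))$. Therefore $\E[\text{wrong votes}] = \Order(1/k) = o(1)$, and by Markov no wrong vote occurs with probability $1 - o(1)$. Combining the two bounds, correct votes strictly dominate wrong votes with probability $1 - e^{-36} - o(1) \geq 0.95$, Bob's majority outputs $A_{x,y}$ correctly, and together with the $\Order(n^\epsilon s)$-bit communication this delivers the required $\textsf{BInd}^{n,\,n - n^{1-\epsilon}/40}_{0.05}$ protocol.
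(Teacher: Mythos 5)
Your communication-bound and correct-vote analysis match the paper's: you use Lemma~\ref{lem:matching-size} and the lower half of Lemma~\ref{lem:learn} essentially as the paper does. The substantive divergence is in bounding wrong votes, and this is where your argument has a genuine gap. The paper bounds a wrong vote per run only by $\frac{0.99}{2C}$ (the same permutation argument used for correct votes, via the upper half of Lemma~\ref{lem:learn}), which makes $\E[\text{wrong votes}] = \Theta(1)$; it therefore needs the quantitative comparison ``correct claims $\geq 21$ vs.\ wrong claims $\leq 19$ w.h.p.'' via Chernoff. You instead claim a per-run wrong-vote probability of $\Order(1/(Ck))$ by asserting the target, conditioned on the view and on being a non-edge, is uniform over the $\Theta(k^2)$-sized rectangle $U'_{\text{unmatch}} \times V'_{\text{unmatch}}$. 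This step is not justified, for two reasons.

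First, the hidden pairing is \emph{not} conditionally uniform over all perfect matchings of $U'_{\text{unmatch}} \times V'_{\text{unmatch}}$. Conditioning on the algorithm's view fixes $A'$ on that rectangle, and the diagonal non-edge positions must be non-edges of $A'$; so the posterior on the hidden pairing is uniform over perfect matchings of the \emph{non-edge graph} $\bar{H}$ on $U'_{\text{unmatch}} \times V'_{\text{unmatch}}$, not over all $m!$ matchings. Second, even granting that, the marginal of one edge of a uniform random perfect matching of $\bar{H}$ is not uniform over the edges of $\bar{H}$: it is proportional to the number of perfect matchings using that edge (a permanent ratio), which can fluctuate. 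Getting $\Pr[(u,v) \in D_{\text{non}}] = \Order(1/m)$ \emph{simultaneously for all} $(u,v)$ — which your argument needs, because a worst-case $\mathbf{A}$ may deliberately output the $|M_i|$ highest-marginal non-edges — is a concentration statement about perfect matchings of random bipartite graphs that you do not prove. Without the claimed uniformity you fall back to the bound $\Pr[Q_i = 1 \mid \text{view, target non-edge}] \leq |M_i|/m = \Theta(1/C)$, so $\E[\text{wrong votes}] = \Theta(1)$, and Markov no longer gives ``no wrong vote w.h.p.''. At that point you would need to re-derive the paper's vote-counting argument, so the shortcut does not actually shorten the proof as it stands.
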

\begin{proof}
Let $C = n^{\epsilon}$ and let $k = n - \frac{1}{40} n^{1-\epsilon}$.
 First, by Lemma~\ref{lem:matching-size}, with probability $1-\frac{1}{k^{10}}$, the graph 
 $G(A') - E_{del}$ contains a matching of size at least $0.99 k/ 2$. By a union bound, 
 the probability that this graph is of at least this size in each of the $\ell$ iterations  
 is at least $1 - \frac{\ell}{k^{10}}$. Suppose from now on that this event happens.

 Let $\ell_1$ be the number of times the algorithm $\mathbf{A}$ succeeds, and let $\ell_0$ be the number of times $\mathbf{A}$
 errs. Then, $\ell = \ell_0 + \ell_1$. Whenever $\mathbf{A}$ succeeds, since $\mathbf{A}$ is a $C$-approximation
 algorithm, the matching $M'_i$ is of size $0.99 \frac{k}{2C}$, which further implies that $M_i$ is of size exactly
 $0.99 \frac{k}{2C}$. Since the algorithm must return a correct matching, every time we have a claim (i.e. $Q_i = 1$), the claimed bit value must be correct. Thus, by Lemma~\ref{lem:learn}, we get a correct claim on $A_{x,y}$ with probability at least 
 \begin{eqnarray*}
  \frac{0.99}{2C} - \frac{2(n - k)}{k} & = & \frac{0.99}{2n^{\epsilon}} - \frac{2(\frac{1}{40} n^{1-\epsilon})}{n - \frac{1}{40} n^{1-\epsilon}} \ge  \frac{0.99}{2n^{\epsilon}} - \frac{\frac{1}{40} n^{1-\epsilon}}{n} = \frac{0.99}{2n^{\epsilon}} - \frac{1}{40n^{\epsilon}} \ge \frac{2}{5 n^\epsilon}   \ , \label{eqn:018}
 \end{eqnarray*}
where we used the inequality $\frac{2x}{y-x} \ge \frac{x}{y}$, which holds for every $y > x$. 
We thus expect to see the correct bit claimed at least $\ell_1 \cdot \frac{2}{5 n^\epsilon}$ times in total.
On the other hand, incorrect claims of the bit value can only occur when the algorithm errs. In the worst case, $\mathbf{A}$ 
will make as many false claims as possible - so we assume the algorithm never results in $M_i = \emptyset$ when it errs. 
Lemma~\ref{lem:learn} also allows us to bound the probability of an incorrect claim for this bad algorithm by $\frac{0.99}{2n^\epsilon}$. We thus expect to see the wrong bit value claimed at most $\ell_0 \cdot \frac{0.99}{2C} \le \frac{\ell_0}{2n^\epsilon}$ times.
 
 Recall that $\ell = 100 n^\epsilon$. Then, by standard concentration bounds, the probability that 
 $\ell_0 \ge 2 \cdot \frac{\ell}{10}$ is at most $\frac{1}{100}$ (recall that the error probability of $\mathbf{A}$ 
 is at most $\frac{1}{10}$). Suppose now that $\ell_0 \le \frac{1}{5} \ell$ holds, which also implies that $\ell_1 \ge \frac{4}{5} \ell$.
 We thus expect to learn the correct bit at least 
 $$\frac{4}{5} 100 n^{\epsilon} \cdot \frac{2}{5 n^\epsilon} = 32$$
 times, and using a Chernoff bound, it can be seen that the probability that we learn the correct bit less than 
 $21$ times is at most $0.02$. 
 Similarly, we expect to learn the incorrect bit at most 
 $$\frac{1}{5} 100 n^{\epsilon} \cdot \frac{1}{2n^\epsilon} = 10$$
 times, and by a Chernoff bound, it can be seen that the probability that we learn the incorrect bit at least $20$ times 
 is at most $ 0.01$. Our algorithm therefore succeeds if all these events happen. Taking a union bound
 over all failure probabilities that occurred in this proof, we see that our algorithm succeeds with probability
 $$1 - \frac{100n^{\epsilon}}{k^{10}} - 0.01 - 0.02 - 0.01 \ge 0.95 \ .$$
\end{proof}

Since by Theorem~\ref{thm:bind-lower}, $\textsf{BInd}^{n,n - \frac{1}{40} n^{1-\epsilon}}_{0.05}$ has randomized one-way 
communication complexity $\Omega(n^{2-2\epsilon})$, by Theorem~\ref{thm:reduction} we obtain our main result of this section:
\begin{corollary}
 Every insertion-deletion $n^\epsilon$-approximation streaming algorithm  
 for \textsf{Maximum Matching} that errs with probability at most $\frac{1}{10}$ requires space $\Omega(n^{2-3\epsilon})$.
\end{corollary}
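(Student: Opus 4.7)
The plan is to chain Theorem~\ref{thm:reduction} and Theorem~\ref{thm:bind-lower} and verify that the chosen parameters match up. Suppose $\mathbf{A}$ is an $n^\epsilon$-approximation insertion-deletion streaming algorithm for \textsf{Maximum Matching} with error at most $1/10$ and space $s$. First, I would invoke Theorem~\ref{thm:reduction} to convert $\mathbf{A}$ into a one-way two-party protocol for $\textsf{BInd}^{\,n,\,n-\tfrac{1}{40}n^{1-\epsilon}}_{0.05}$ whose total communication is $\Order(n^\epsilon \cdot s)$; the $n^\epsilon$ factor is the number $\ell = 100 \cdot n^\epsilon$ of parallel copies of $\mathbf{A}$ that the reduction runs, each contributing a memory state of size $s$ to the message.

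Next, I would apply the lower bound of Theorem~\ref{thm:bind-lower} to this instance of \textsf{Augmented Bi-Index}. Since the error parameter $\delta = 0.05$ satisfies $\delta < 1/3$, the hypothesis of the theorem is met, and any protocol for $\textsf{BInd}^{n,k}_{0.05}$ must transmit $\Omega((n-k)^2)$ bits. With $k = n - \tfrac{1}{40}n^{1-\epsilon}$, we have $n - k = \tfrac{1}{40}n^{1-\epsilon}$, so this lower bound evaluates to $\Omega(n^{2-2\epsilon})$.

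Combining the two estimates gives $n^\epsilon \cdot s = \Omega(n^{2-2\epsilon})$, and solving for $s$ yields $s = \Omega(n^{2-3\epsilon})$, as claimed. There is essentially no obstacle beyond bookkeeping in this step: the heavy lifting has already been done in proving Theorems~\ref{thm:bind-lower} and~\ref{thm:reduction}. The only point I would pause on is the implicit streaming-to-one-way reduction invoked by Theorem~\ref{thm:reduction} — namely, that Alice can run $\mathbf{A}$ on her inserted edges and hand the memory state to Bob, who then continues $\mathbf{A}$ by feeding in his deletions — but this is the standard correspondence between insertion-deletion streaming space and one-way two-party communication complexity, so no further work is required.
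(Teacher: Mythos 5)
Your proof is correct and follows exactly the paper's route: apply Theorem~\ref{thm:reduction} to turn the streaming algorithm into a $\textsf{BInd}^{n,n-\frac{1}{40}n^{1-\epsilon}}_{0.05}$ protocol with $\Order(n^\epsilon s)$ communication, then invoke Theorem~\ref{thm:bind-lower} (noting $0.05 < 1/3$ and $(n-k)^2 = \Theta(n^{2-2\epsilon})$) and solve for $s$. The paper states this in a single sentence; your write-up just makes the bookkeeping explicit.
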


\section{Minimum Vertex Cover} \label{sec:vertex-cover}
Let $\mathbf{B}$ be a $C$-approximation insertion-deletion streaming algorithm for \textsf{Minimum Vertex Cover} that succeeds with
probability $1 - 1/400$. Similar to the previous section, we will now show how $\mathbf{B}$ can be used to solve $\textsf{BInd}^{n,k}_{\delta}$.

\subsection{Reduction}
Let $A \in \{0, 1\}^{n \times n}, x \in [n-k]$ and  $y \in [n-k]$ be an instance of $\textsf{BInd}^{n,k}_{\delta}$. 
The reduction for \textsf{Minimum Vertex Cover} is very similar to the reduction for \textsf{Maximum Matching} presented 
in the previous section. Alice's behaviour is in fact identical: 

First, Alice and Bob sample a uniform random binary matrix $X \in \{0,1\}^{n \times n}$ and random permutations 
$\sigma_1, \sigma_2: [n] \rightarrow [n]$ from public randomness. 
Alice then computes matrix $A'$ which is obtained by first computing $A \oplus X$
and then  permuting the rows and then the columns of the resulting matrix by $\sigma_1$ and $\sigma_2$, respectively. 
Alice interprets $A'$ as the incidence matrix of a bipartite graph $G(A')$. Alice then runs algorithm $\mathbf{B}$ 
on a random ordering of the edges of $G(A')$ and sends the resulting memory state to Bob.

Next, Bob also computes the entry-wise XOR between the part of the matrix $A$ that he knows about and $X$, followed by applying
the permutations $\sigma_1$ and $\sigma_2$. In doing so, Bob knows the matrix entries of $A'$ at positions 
$(\sigma_1(i), \sigma_2(j))$
for every $(i,j) \in S(x,y)$. He can therefore compute the subset $E_S$ of the edges of $G(A')$ with

$$E_S = \{(\sigma_1(i),\sigma_2(j)) \in [n]^2 \ | \ (i,j) \in S(x,y) \mbox{ and } A'(\sigma_1(i),\sigma_2(j)) = 1  \} \ .$$

Next, Bob continues the execution of $\mathbf{B}$ and introduces deletions {\em for all edges in $E_S$} in random order. 
Observe that this 
step is different to the reduction for \textsf{Maximum Matching}. Let $I$ be the vertex cover produced by $\mathbf{B}$.

\textbf{Parallel Executions.} Alice and Bob run the procedure above $40$ times in parallel. Denote by $I^i$, $X^i$, $E^i_S$,
$A'^i$, $\sigma_1^i$, and $\sigma_2^i$ the variables $I, X, E_S, A', \sigma_1$ and $\sigma_2$ used in iteration $i$.  
Furthermore, let $Q_i$ be 
the indicator variable that is $1$ iff $\{ \sigma^i_1(x), \sigma^i_2(y) \} \cap I_i \neq \varnothing$,
i.e., the potential edge $(\sigma^i_1(x), \sigma^i_2(y))$ is covered by the vertex cover. 

If there exists a run $j$ with $Q_j = 0$, then Bob predicts $A_{x,y} = X_{x,y}$ (if there are multiple such runs
then Bob breaks ties arbitrarily). Otherwise, Bob returns \texttt{fail} and the algorithm errs.

\subsection{Analysis}

The first lemma applies to every parallel run $j$. For simplicity of notation, we will omit
the superscripts that indicate the parallel run in our random variables.

We first show an upper bound on the size of a minimum vertex cover in $G(A') - E_S$.

\begin{lemma}\label{lem:vc-size}
 The size of a minimum vertex cover in $G(A') - E_S$ is at most $2(n-k) + 1$.
\end{lemma}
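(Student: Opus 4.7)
The plan is to construct an explicit vertex cover of size $2(n-k)+1$. Let $U$ and $V$ denote the two sides of the bipartition of $G(A')$ (each of size $n$), and write $U_b = \sigma_1([x,x+k))$ and $V_b = \sigma_2([y,y+k))$ for the ``block'' vertex sets, i.e.\ the images under $\sigma_1$ and $\sigma_2$ of the $k$ rows and $k$ columns hit by the window around $(x,y)$. I would then propose
\[
    C = (U \setminus U_b) \cup (V \setminus V_b) \cup \{\sigma_1(x)\} \ ,
\]
whose size is clearly $(n-k) + (n-k) + 1 = 2(n-k) + 1$.

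To verify that $C$ covers every surviving edge of $G(A') - E_S$, I would take an arbitrary edge $(u,v)$ with $u \in U$, $v \in V$ that is not in $E_S$ and split into cases. If $u \notin U_b$, then $u \in C$; if $v \notin V_b$, then $v \in C$; so only the case $u \in U_b$ and $v \in V_b$ remains. In that case $u = \sigma_1(i)$ and $v = \sigma_2(j)$ for some $(i,j) \in [x,x+k) \times [y,y+k)$, and by the definition of $E_S$ (which removes every edge of $G(A')$ indexed by $S(x,y)$), this edge would have been deleted unless $(i,j) = (x,y)$. Hence $u = \sigma_1(x)$, which sits in $C$ by construction, closing the argument.

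The only real observation needed is that the set $S(x,y)$ omits exactly the single position $(x,y)$ inside the $k \times k$ block, so deleting $E_S$ leaves at most one possible edge whose both endpoints lie in the ``block'' portion of the bipartition — and that edge forces the single extra vertex in the cover. Everything else is a straightforward counting step, and I do not anticipate any technical obstacle; the lemma is essentially a structural statement about what $E_S$ removes.
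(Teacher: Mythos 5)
Your proof is correct and uses essentially the same construction as the paper: the cover $(U \setminus U') \cup (V \setminus V') \cup \{\sigma_1(x)\}$, justified by observing that $E_S$ deletes every edge inside the $k \times k$ block except the one potential edge at $(\sigma_1(x), \sigma_2(y))$. You merely spell out the case analysis that the paper leaves implicit.
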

\begin{proof}
Let $U, V$ be the bipartitions of the graph $G(A') - E_S$, let $U' = \{\sigma_1(a) \ : \ a \in [x, x+k) \}$ and let
$V' = \{\sigma_2(b) \ : \ b \in [y, y+k) \}$. Observe that $(G(A') - E_S)[U' \cup V']$ 
contains at most one edge: The potential edge between $\sigma_1(x)$ and $\sigma_2(y)$. A valid vertex cover 
of $G(A') - E_S$ is therefore $(U \setminus U') \cup (V \setminus V') + \sigma_1(x)$,
which is of size $2(n-k) + 1$.
\end{proof}

Next, we prove the key property of our reduction: We show that if $A'_{\sigma_1(x), \sigma_2(y)} = 0$ (or equivalently, 
$A_{x,y} \oplus X_{x,y} = 0$)
then neither $\sigma_1(x)$ nor $\sigma_2(y)$ is in the output vertex cover with large probability.

\begin{lemma} \label{lem:covered}
 Assume that algorithm $\mathbf{B}$ does not err in run $j$. Suppose that 
 $A'^j_{\sigma^j_1(x), \sigma^j_2(y)} = 0$. Then the probability that $Q_j = 1$ is at most 
 $$\frac{3 C \cdot (2(n-k) + 1)}{k} \ .$$
\end{lemma}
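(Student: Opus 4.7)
The plan is to exploit the randomness of the XOR pad $X^j$ and the permutations $\sigma^j_1, \sigma^j_2$, which together make the matrix $A'^j$ uniform in $\{0,1\}^{n \times n}$ and independent of $(p, q, U', V')$, where I write $p := \sigma^j_1(x)$, $q := \sigma^j_2(y)$, $U' := \sigma^j_1([x, x+k))$, $V' := \sigma^j_2([y, y+k))$. Conditional on $(U', V')$, the pair $(p, q)$ is uniform in $U' \times V'$, and further conditioning on the hypothesis $A'^j_{p, q} = 0$ then makes it uniform over the zero entries of $A'^j$ restricted to $U' \times V'$.

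The first step I would carry out is a decoupling: under the hypothesis $A'^j_{p, q} = 0$, the deletion set simplifies to $E_S^j = \{(u, v) \in U' \times V' : A'^j_{u, v} = 1\}$, with no dependence on $(p, q)$, because the customary exclusion of the ``missing centre'' $(x, y)$ from $S(x, y)$ is vacuous. Consequently the stream fed to $\mathbf{B}$, and hence the resulting cover $I_j$, is determined by $(A'^j, U', V')$ and the algorithm's randomness only; it carries no information about $(p, q)$. I would then condition on $(A'^j, U', V')$ together with all other randomness: under this conditioning $I_j$ is a fixed set while $(p, q)$ is uniform over $Z_{A'^j} := \{(u, v) \in U' \times V' : A'^j_{u, v} = 0\}$, say of size $Z$. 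A direct counting argument shows that the conditional probability of $Q_j = 1$ is at most $|I_j| \cdot k / Z$, because each row $u \in I_j \cap U'$ or column $v \in I_j \cap V'$ contributes at most $k$ zero entries and $|I_j \cap U'| + |I_j \cap V'| \le |I_j|$ in the bipartite setting. Lemma~\ref{lem:vc-size} together with the no-err assumption bound $|I_j|$ by $C(2(n-k)+1)$, while a standard Chernoff concentration on the (essentially) $k^2$ i.i.d.\ Bernoulli$(1/2)$ entries of $A'^j|_{U' \times V'}$ gives $Z \ge k^2/3$ except with probability $e^{-\Omega(k^2)}$, which is absorbed into the constant. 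Averaging over $(A'^j, U', V')$ then yields the claimed bound $\frac{3 C(2(n-k)+1)}{k}$.

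The main obstacle is the decoupling step: one must carefully verify that, once $A'^j_{p, q} = 0$ is imposed, $(p, q)$ leaves no trace in the stream, so that $\mathbf{B}$ genuinely sees it as one among the $Z$ equally likely zero entries of $A'^j|_{U' \times V'}$. The remainder is a routine counting argument combined with a Chernoff bound.
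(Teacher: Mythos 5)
Your proof is correct and the final bound is the same, but the counting set is genuinely different from the paper's. The paper restricts attention to the permuted diagonal $D = \{(\sigma_1^j(x+i), \sigma_2^j(y+i)) : 0 \le i < k\}$, a set of $k$ positions with distinct rows and distinct columns; it argues that with high probability $D$ contains at least $0.99k/2$ zero entries, that each vertex of $I_j$ can cover at most one position of $D$ (so at most $|I_j| \le C(2(n-k)+1)$ of them are covered), and that the centre is uniform over the zero positions of $D$, giving the bound $|I_j| / (0.99 k/2)$. You instead average over the full permuted block $U' \times V'$ of size $k^2$: there are $\approx k^2/2$ zero entries, each vertex of $I_j$ covers at most $k$ of them, and the centre is uniform over the zero positions of the block, giving $|I_j|\cdot k / (k^2/3)$. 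Numerator and denominator both scale by a factor $k$ relative to the paper, so the bounds coincide.

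Both arguments rest on the same decoupling fact, which you make explicit and the paper leaves implicit: once you condition on $(A'^j, U', V')$ (and the algorithm's internal randomness), the cover $I_j$ is fixed while $(p,q) = (\sigma_1^j(x), \sigma_2^j(y))$ remains uniform over $U' \times V'$, because $A'^j$ is uniform and independent of the permutations, and the exclusion of the centre from $S(x,y)$ is vacuous under $A'^j_{p,q}=0$. Making this explicit is a strength, as it is really the crux of the lemma. One small remark: your detour through a Chernoff bound on $Z$ is not strictly necessary — averaging $\Pr[Q_j=1 \wedge A'_{p,q}=0 \mid A', U', V'] \le |I_j|\cdot k/k^2$ over $(A',U',V')$ and dividing by $\Pr[A'_{p,q}=0]=1/2$ already gives $2|I_j|/k$, which is even slightly tighter — but your Chernoff route is valid and the lingering $e^{-\Omega(k^2)}$ failure event is dealt with at the same level of rigor as the paper's own $1/k^{10}$ term.
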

\begin{proof}
 Consider the set $D  = \{ (\sigma^j_1(x + i), \sigma^j_2(y+i)) \ | \ 0 \le i \le k-1 \}$, i.e., the positions of the diagonal
 of $S(x,y) \cup \{x, y\}$ permuted by $\sigma^j_1$ and $\sigma^j_2$. Then, since $A'^j$ is a uniform random matrix, with probability
 at least $1-\frac{1}{k^{10}}$, the ``permuted diagonal'' $A'^j_{D}$ contains at least $0.99 k/ 2$ entries with value $0$,
 or, in other words, graph $G(A'^j) - E^j_S$ contains at least $0.99 k/ 2$ non-edges in the positions of the permuted diagonal $D$.
 By Lemma~\ref{lem:vc-size}, the size of a minimum vertex cover in $G(A'^{j}) - E^{j}_S$ is at most $2(n-k) + 1$, and since
 $\mathbf{B}$ has an approximation factor of $C$, the  vertex cover $I_j$ is of size at most $C \cdot (2(n-k) + 1)$. Hence, 
 at most $C \cdot (2(n-k) + 1)$ non-edges in $D$ can be covered in $I_j$. However, since the permutations are random, the 
 probability that the non-edge $(\sigma^j_1(x), \sigma^j_2(y))$
 is covered, which is identical to the event $Q_j = 1$, is therefore at most 
 $$\frac{C \cdot (2(n-k) + 1)}{0.99 k/ 2} \le \frac{3 C \cdot (2(n-k) + 1)}{k} \ .$$
\end{proof}

\begin{theorem} \label{thm:vc-reduction}
 Let $\mathbf{B}$ be a $n^{\epsilon}$-approximation 
 insertion-deletion streaming algorithm for \textsf{Minimum Vertex Cover} that uses space $s$ and errs with probability 
 at most $1/400$. Then, there exists a communication protocol for $\textsf{BInd}^{n,n - \frac{1}{20}n^{1-\epsilon}}_{\frac{1}{3}}$ 
 that communicates $\Order(s)$ bits.
\end{theorem}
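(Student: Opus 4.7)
The plan is to follow the template of Theorem~\ref{thm:reduction} but, crucially, to exploit that only $\ell = 40$ parallel runs are used, so that no Chernoff concentration is required and the communication stays at $\Order(s)$.

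First, I would substitute $C = n^{\epsilon}$ and $k = n - \frac{1}{20}n^{1-\epsilon}$ into Lemma~\ref{lem:covered}. Using $2(n-k) = \tfrac{1}{10} n^{1-\epsilon}$ and $k \ge \tfrac{19}{20} n$, this yields
\[
\Pr\big[\,Q_j = 1 \,\big|\, B_j,\ \text{no error in run }j\,\big] \;\le\; \frac{3C(2(n-k)+1)}{k} \;\le\; \tfrac{1}{3}
\]
for $n$ large enough, where $B_j$ denotes the event $A_{x,y} \oplus X^j_{x,y} = 0$, which is exactly the hypothesis $A'^j_{\sigma^j_1(x),\sigma^j_2(y)} = 0$ of Lemma~\ref{lem:covered}. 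Because $X^j_{x,y}$ is an independent uniform bit drawn from public randomness, $\Pr[B_j] = 1/2$.

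The key observation driving the reduction is that whenever $\mathbf{B}$ does not err in run $j$ and $Q_j = 0$, neither endpoint of the potential edge at $(\sigma^j_1(x), \sigma^j_2(y))$ lies in $I_j$, so that edge cannot exist and $A'^j_{\sigma^j_1(x),\sigma^j_2(y)} = 0$, which forces $A_{x,y} = X^j_{x,y}$; hence Bob's prediction from any such run is correct. Therefore Bob can fail only if either (i) some run has an error, or (ii) no run produces $Q_j = 0$. I would bound (i) by a union bound over the $40$ runs, giving $\Pr[\text{some error}] \le 40/400 = 1/10$. For (ii), the runs are mutually independent, and chaining conditionings gives
\[
\Pr[Q_j = 0] \;\ge\; \Pr[B_j]\cdot\Pr[\text{no error}\,|\,B_j]\cdot\Pr[Q_j = 0\,|\,B_j,\text{no error}] \;\ge\; \tfrac{1}{2}\cdot\tfrac{399}{400}\cdot\tfrac{2}{3} \;>\; \tfrac{1}{3},
\]
so that $\Pr[\text{no run has } Q_j = 0] \le (2/3)^{40}$, which is negligible. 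Summing, the total failure probability is at most $1/10 + (2/3)^{40} < 1/3$, and the communication is dominated by $40$ copies of $\mathbf{B}$'s memory state, i.e., $\Order(s)$ bits.

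The only delicate point is the middle factor in the chain of conditionings: one must justify that conditioning on $B_j$ does not degrade $\mathbf{B}$'s error guarantee. This follows because $B_j$ depends only on Alice's input and the public bit $X^j_{x,y}$, while $\mathbf{B}$'s error bound of $1/400$ holds for every input, hence also under any distribution over inputs conditioned on $B_j$. Everything else is a routine plug-in of the parameters into Lemma~\ref{lem:covered} together with a single product-of-independent-Bernoullis tail bound across the constant number of parallel runs.
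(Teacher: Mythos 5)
Your strategy follows the paper's outline up through Lemma~\ref{lem:covered}, but the final error analysis is genuinely different and tighter than the paper's. The paper bounds the failure probability by a three-way union bound: $\Pr[\text{some run errs}] + \Pr[\text{no run has } B_j] + \Pr[Q_i = 1 \mid B_i, \text{no error}]$, which forces the last term to be small on its own; notably, the paper's proof carries this out with $k = n - \frac{1}{40}n^{1-\epsilon}$ (so the last term is $\tfrac{3}{20}+o(1)$), which does not match the $k = n - \frac{1}{20}n^{1-\epsilon}$ of the theorem statement --- with the theorem's $k$ the paper's union bound would give $\tfrac{1}{10} + \tfrac{3}{10} + o(1) > \tfrac{1}{3}$. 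You instead use independence of the $40$ runs to bound $\Pr[\text{all } Q_j = 1] \le (1 - \Pr[Q_j=0])^{40}$, needing only $\Pr[Q_j=0]$ to be bounded away from zero. This buys two things: it makes the theorem-as-stated parameter $k = n - \frac{1}{20}n^{1-\epsilon}$ work, and it exposes that $40$ is just ``large enough'' rather than a tuned constant. Your justification that conditioning on $B_j$ leaves the error guarantee of $\mathbf{B}$ intact is correct and is a step the paper glosses over.

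There is, however, a small but real arithmetic slip in the chain
\[
\Pr[Q_j = 0] \ge \tfrac{1}{2}\cdot\tfrac{399}{400}\cdot\tfrac{2}{3} > \tfrac{1}{3}.
\]
The left-hand product is $\tfrac{399}{1200} \approx 0.3325 < \tfrac{1}{3}$, so as written the inequality is false and your $(2/3)^{40}$ bound is not justified. The fix is straightforward: your own substitution gives $\frac{3C(2(n-k)+1)}{k} \to \frac{3}{10}$ (not merely $\le \tfrac{1}{3}$), so $\Pr[Q_j = 0 \mid B_j,\text{no error}] \ge \tfrac{7}{10} - o(1)$, hence $\Pr[Q_j = 0] \ge \tfrac{1}{2}\cdot\tfrac{399}{400}\cdot(\tfrac{7}{10}-o(1)) \approx 0.349 > \tfrac{1}{3}$ for $n$ large, after which $(2/3)^{40}$ is a valid bound and $\tfrac{1}{10} + (2/3)^{40} < \tfrac{1}{3}$ as desired. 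With this constant corrected, your proof is sound.
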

\begin{proof}
Let $k = n - \frac{1}{40} n^{1-\epsilon}$ and let $C= n^{\epsilon}$.
 Consider the reduction given in the previous subsection. First, observe that since $\mathbf{B}$ errs with probability
 at most $1/400$, by the union bound the probability that $\mathbf{B}$ errs at least once in the $40$ parallel executions of our
 reduction is at most $\frac{1}{10}$. We assume from now on that the algorithm never errs.
 
 Observe that the matrices $A'^j$ are random matrices. Hence, 
 the probability that there exists at least one run $i$ with $A'^i_{\sigma^i_1(x), \sigma^i_2(y)} = 0$ is at least 
 $1 - (\frac{1}{2})^{40}$. Suppose that this event happens. Let run $i$ be so that $A'^i_{\sigma^i_1(x), \sigma^i_2(y)} = 0$.
 Then, by Lemma~\ref{lem:covered}, the probability that the non-edge $(\sigma^i_1(x), \sigma^i_2(y))$ is covered by $I_i$,
 or in other words, the probability that $Q_i = 1$, is at most 
 
 $$\frac{3 C \cdot (2(n-k) + 1)}{k} = \frac{3n^{\epsilon} \cdot (\frac{1}{20} n^{1-\epsilon} + 1) }{n - \frac{1}{40}n^{1-\epsilon}} = 
 \frac{\frac{3}{20}n + 3n^{\epsilon}}{n - \frac{1}{40}n^{1-\epsilon}} = \frac{3}{20} + o(1) \ . $$  
 
 Observe that whenever $Q_i = 0$, the algorithm outputs $X^i_{x,y}$ as a predictor for $A_{x,y}$. Since the algorithm $\mathbf{B}$ 
 does not err, we have $A_{x,y} \oplus X^i_{x,y} = 0$. This implies that $A_{x,y} = X^i_{x,y}$, which establishes correctness.
 
 Last, we need to bound the error probability of our algorithm. First, the probability that at least one of the $40$ runs 
 fails is at most $\frac{1}{10}$. Next, the probability that none of the runs are such that $A'^j_{\sigma^j_1(x), \sigma^j_2(y)} = 0$
 is at most $(\frac{1}{2})^{40}$. Furthermore, the probability that $Q_i = 1$ when $A'^i_{\sigma^i_1(x), \sigma^i_2(y)} = 0$ is at 
 most $\frac{3}{20} + o(1)$. Applying the union bound, we see that the overall error probability of our algorithm is at most
 $$\frac{1}{10}+ (\frac{1}{2})^{40} + \frac{3}{20} + o(1) \le \frac{1}{3} \ , $$
 for large enough $n$. 
\end{proof}
Since by Theorem~\ref{thm:bind-lower}, $\textsf{BInd}^{n,n - \frac{1}{40}n^{1-\epsilon}}_{\frac{1}{3}}$ has a 
communication complexity of $\Omega(n^{2-2\epsilon})$, we
obtain the following result:
\begin{corollary}
 Every insertion-deletion $n^\epsilon$-approximation streaming algorithm for \textsf{Minimum Vertex Cover} with error probability 
 at most $\frac{1}{400}$ requires space $\Omega(n^{2-2\epsilon})$.
\end{corollary}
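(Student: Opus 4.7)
The plan is to derive this corollary by directly chaining together the two results just proved: Theorem~\ref{thm:vc-reduction}, which converts any streaming algorithm into a communication protocol for \textsf{Augmented Bi-Index}, and Theorem~\ref{thm:bind-lower}, which lower bounds the communication complexity of that problem.

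First I would invoke the standard reduction from one-way two-party communication to insertion-deletion streaming: given any insertion-deletion streaming algorithm $\mathbf{B}$ that uses space $s$, Alice can simulate $\mathbf{B}$ on her edge set, ship the resulting memory state (at most $s$ bits) to Bob, and Bob can continue feeding the stream with his deletions before reading off the output. Hence, any space-$s$ streaming algorithm for \textsf{Minimum Vertex Cover} with approximation ratio $n^\epsilon$ and error $\le 1/400$ immediately yields a one-way two-party protocol with the same parameters and messages of length $O(s)$.

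Next I would plug this into Theorem~\ref{thm:vc-reduction} with $k = n - \tfrac{1}{40}n^{1-\epsilon}$ and $C = n^\epsilon$, which produces a one-way protocol for $\textsf{BInd}^{n,\,n-\tfrac{1}{40}n^{1-\epsilon}}_{1/3}$ communicating $O(s)$ bits. Finally I would apply Theorem~\ref{thm:bind-lower} to this instance: since $n - k = \tfrac{1}{40}n^{1-\epsilon}$, any such protocol must communicate $\Omega((n-k)^2) = \Omega(n^{2-2\epsilon})$ bits, forcing $s = \Omega(n^{2-2\epsilon})$.

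There is no substantive obstacle here since the technical content has been absorbed into the two cited theorems; the only thing to double-check is that the parameter regime matches (the reduction in Theorem~\ref{thm:vc-reduction} is stated with exactly the value of $k$ needed, and the error $1/3$ falls within the range $\delta < 1/3$ required by Theorem~\ref{thm:bind-lower}), and that the space-to-message-length reduction preserves the approximation and error guarantees, which it does because Bob processes his deletions honestly after receiving Alice's state.
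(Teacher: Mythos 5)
Your proposal is correct and takes essentially the same route as the paper: absorb the streaming-to-communication reduction, instantiate Theorem~\ref{thm:vc-reduction} to get a protocol for $\textsf{BInd}$ with the appropriate $k$, and feed this into Theorem~\ref{thm:bind-lower} to conclude $s = \Omega((n-k)^2) = \Omega(n^{2-2\epsilon})$. One small inaccuracy worth flagging: you write that the error $1/3$ ``falls within the range $\delta < 1/3$,'' but $1/3$ is not strictly below $1/3$; the reason this nonetheless goes through (and the paper has the same cosmetic slip) is that the proof of Theorem~\ref{thm:vc-reduction} actually bounds the failure probability by $\tfrac{1}{10} + 2^{-40} + \tfrac{3}{20} + o(1)$, a constant strictly below $1/3$ for large $n$, so Theorem~\ref{thm:bind-lower} does apply.
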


\subsection{Insertion-deletion Streaming Algorithm for \textsf{Minimum Vertex Cover}}
We now sketch a simple deterministic $n^\epsilon$-approximation insertion-deletion streaming algorithm for 
\textsf{Minimum Vertex Cover} on general graphs that uses space $\Order(n^{2-2\epsilon} \log n)$. 
Let $G=(V, E)$ be the graph described by the input stream. The algorithm proceeds as follows: 
\begin{enumerate}
 \item Arbitrarily partition $V$ into subsets $V_1, V_2, \dots, V_{n^{1-\epsilon}}$,
 each of size $n^{\epsilon}$.
 \item Consider the multi-graph $G'$ obtained from $G$ by contracting the sets $V_i$ into vertices.
 \item \textbf{While processing the stream:} For each pair of vertices $V_i, V_j$ in $G'$ deterministically maintain the number 
 of edges connecting $V_i$ to $V_j$. 
 \item \textbf{Post-processing:} Compute a minimum vertex cover $I'$ in the multi-graph $G'$. 
 \item \textbf{Return} $I = \cup_{V_j \in I'} V_j$ as the vertex cover in $G$. 
\end{enumerate}

\noindent \textbf{Analysis:} Regarding space, the dominating space requirement is the maintenance of the number
of edges between every pair $V_i, V_j$. Since there are $n^{2-2\epsilon}$ such pairs, this requires space $\Order(n^{2-2\epsilon} \cdot \log n)$.

Concerning the approximation factor, let $I^*$ be a minimum vertex cover in $G$. 
Recall that $I'$ is an optimal cover in $G'$ and hence $|I'| \le |I^*|$ (edge contractions cannot increase the size of 
a minimum vertex cover). 
Since every set $V_j$ is of size $n^\epsilon$, 
the computed vertex cover $I$ is of size at most $|I'| \cdot n^{\epsilon} \le |I^*|n^{\epsilon}$, which proves the approximation factor.
By construction of the algorithm, every edge is covered.

\begin{theorem} \label{thm:vc-algorithm}
 There is a deterministic $n^\epsilon$-approximation insertion-deletion streaming algorithm for \textsf{Minimum Vertex Cover}
 that uses space $\Order(n^{2-2\epsilon} \log n)$.
\end{theorem}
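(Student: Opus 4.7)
The plan is to verify the three claims implicit in the algorithm sketched above, namely that the counter-based data structure can be maintained deterministically on an insertion-deletion stream, that the total memory is $\Order(n^{2-2\epsilon} \log n)$, and that the returned set $I$ is a vertex cover of $G$ whose size is within a factor $n^\epsilon$ of the optimum. I will treat each point in turn.

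First, for the data structure, observe that the partition $V_1, \dots, V_{n^{1-\epsilon}}$ is fixed before any stream element arrives, so the algorithm only needs to track, for each (unordered) pair of blocks $V_i, V_j$ and for each single block $V_i$ (to account for within-block edges, which become self-loops after contraction), an integer counter that is incremented on insertion of a corresponding edge and decremented on its deletion. The number of such counters is $\binom{n^{1-\epsilon}}{2} + n^{1-\epsilon} = \Order(n^{2-2\epsilon})$. Each counter is bounded by the maximum number of edges that can live between (or within) a pair of blocks of size $n^\epsilon$, which is at most $n^{2\epsilon} \le n^2$, hence fits in $\Order(\log n)$ bits. This gives the claimed space bound of $\Order(n^{2-2\epsilon} \log n)$.

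Next, for the approximation ratio, the key inequality is that edge contractions cannot increase the minimum vertex cover: given any optimum cover $I^*$ of $G$, the set $\{V_j : V_j \cap I^* \neq \varnothing\}$ is a vertex cover of $G'$ (every edge $(u,v)$ of $G$ is incident to some vertex of $I^*$, and the block containing that vertex covers the corresponding edge or self-loop of $G'$), and has size at most $|I^*|$. Hence the optimal cover $I'$ of $G'$ satisfies $|I'| \le |I^*|$, and so $|I| = \sum_{V_j \in I'} |V_j| = n^\epsilon |I'| \le n^\epsilon |I^*|$. For correctness, any edge $(u,v) \in E$ with $u \in V_a, v \in V_b$ produces an edge or self-loop of $G'$ on $\{V_a, V_b\}$, which must be covered by $I'$; hence at least one of $V_a, V_b$ is in $I'$, so at least one of $u, v$ is in $I$.

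I do not expect a genuine obstacle here; the only mild subtlety is that computing a true optimum $I'$ of $G'$ in the post-processing step is \textsf{NP}-hard, but the theorem concerns only streaming space and imposes no running-time restriction on post-processing, so this is immaterial. (If one insisted on polynomial post-processing, one could lose another constant factor by running a standard $2$-approximation on $G'$, yielding a $2 n^\epsilon$-approximation.)
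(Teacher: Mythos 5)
Your proposal is correct and follows essentially the same route as the paper: contract blocks of size $n^\epsilon$, maintain one $\Order(\log n)$-bit counter per block pair, and use the fact that contraction cannot increase the minimum vertex cover to bound $|I| \le n^\epsilon |I^*|$. The only place you are more careful than the paper's sketch is in explicitly tracking within-block edges as self-loops of $G'$ (which is necessary for the returned set to actually cover every edge) and in noting that optimal post-processing is permitted because only space is constrained; both observations are correct and match the paper's intent.
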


\bibliography{dk20}

\end{document}